\newtheoremstyle{mytheor}
    {3ex}{3ex}{\em}{0ex}{\bfseries}{.}{2ex}
    {{\thmname{#1 }}{\thmnumber{#2}}{\thmnote{ (#3)}}}
\theoremstyle{mytheor}
\newtheorem{prop}{Proposition}
\begin{document}
\title{Mutual Information of Neural Network \\ Initialisations: Mean Field Approximations} 
\author{%
  \IEEEauthorblockN{Jared Tanner and Giuseppe Ughi}
  \IEEEauthorblockA{University of Oxford, Mathematical Institute\\
                    Andrew Wiles Building, Radcliffe Observatory Quarter\\
                    OX2 6GG, Oxford, UK\\
                    Email: \{tanner, ughi\}@maths.ox.ac.uk}
}

\maketitle
\IEEEpubidadjcol

\begin{abstract}
The ability to train randomly initialised deep neural networks is known to depend strongly on the variance of the weight matrices and biases as well as the choice of nonlinear activation. Here we complement the existing geometric analysis of this phenomenon  \cite{schoenholz2016deep} with an information theoretic alternative.  Lower bounds are derived for the mutual information between an input and hidden layer outputs.  Using a mean field analysis we are able to provide analytic lower bounds as functions of network weight and bias variances as well as the choice of nonlinear activation. These results show that initialisations known to be optimal from a training point of view are also superior from a mutual information perspective.

\end{abstract}

\section{Introduction}

Randomly initialised deep neural networks (DNNs) are random nonlinear functions which are drawn and subsequently trained to map a training set of inputs to known outputs. The work in \cite{schoenholz2016deep} showed that DNN initialisations that preserve information about the inputs are typically easier to train. This conclusion was based on geometric considerations of input signal dynamics in a random feed-forward DNN, measured with the distributions of intermediate hidden layers. This was possible as for a DNN denoted by 
\begin{align}\label{eq:FeedForward}
	\textbf{h}^{(\ell)} &=  \textbf{W}^{(\ell)}\phi\left(\textbf{h}^{(\ell-1)}\right) + \textbf{b}^{(\ell)}
\end{align}
where $\textbf{h}^{(1)} = \textbf{W}^{(1)}\textbf{X} + \textbf{b}^{(1)}$ with input $\textbf{X}\in\mathbb{R}^n$, and with $\textbf{W}^{(\ell)}\in\mathbb{R}^{n\times n}$, \cite{Poole2016} determined as a function of the DNN parameters $(\sigma_w,\sigma_b,\phi(\cdot))$ the dynamics of $q^{(\ell)}:=\|\textbf{h}^{(\ell)}\|^2_2$ to its large depth limit $q^*$ in the mean field infinite width limit for $n$ 
for Gaussian initialisation 
\begin{equation}\label{eq:random_wb}
\textbf{W}_{ij}^{(\ell)} \sim \mathcal{N}(0, \sigma_w^2/n)\quad\mbox{ and }\quad \textbf{b}_i^{(\ell)} \sim \mathcal{N}(0,\sigma_b^2).
\end{equation}
This allowed us to consider the geometric stability of the DNN when applied to two nearby points. Specifically, for a given nonlinear activation $\phi(\cdot)$ they derived the set of initialisation parameters $(\sigma_w,\sigma_b)$, denoted the edge of chaos (EoC), which separates the parameter space where nearby points converge to one another (ordered phase) from the domain where nearby points diverge (chaotic phase); see for example Figure \ref{fig:EOC1} for $\phi(\cdot)=\tanh(\cdot)$.  The EoC conditions were later shown by \cite{schoenholz2016deep} and \cite{Pennington2018} to similarly control the size of entries in the gradients used to train DNNs and are essential for training DNNs with many layers.

Here we conduct an alternative information theoretic investigation of random feed-forward DNNs in order to determine how the DNN parameters $(\sigma_w,\sigma_b,\phi(\cdot))$ impact the flow of information through the DNN. We derive the lower bound of the mutual information between the input $\textbf{X}$ and its associated hidden layer value $\textbf{h}^{(\ell)}$, denoted $I(\textbf{X},\textbf{h}^{(\ell)})$.  
Mutual Information (MI) is a measure of the dependence of two random variables.   Given two variables $(\textbf{X},\textbf{Y}) \in \mathcal{X}\times \mathcal{Y}$,  MI is defined as the Kullback–Leibler divergence between the joint distribution $P_{(\textbf{X},\textbf{Y})}$ and the marginal distributions $P_\textbf{X}$ and $P_\textbf{Y}$ 
\begin{align}
    I(\textbf{X}&,\textbf{Y}) = D_{KL}(P_{(\textbf{X},\textbf{Y})}||P_\textbf{X} \otimes P_\textbf{Y})\\ 
    &= \int_\mathcal{X}\int_\mathcal{Y} p_{\textbf{X},\textbf{Y}}(\textbf{x},\textbf{y})\log \left( \frac{p_{\textbf{X},\textbf{Y}}(\textbf{x},\textbf{y})}{p_{\textbf{X}}(\textbf{x})p_{\textbf{Y}}(\textbf{y})}\right)\ d\textbf{x}\ d\textbf{y}.
\end{align}

Our main contribution are lower bounds on $I(\textbf{X},\textbf{h}^{(\ell)})$ which, similarly to \cite{schoenholz2016deep}, are functions of the DNN parameters $(\sigma_w,\sigma_b,\phi(\cdot))$.  We include the mean field analysis of \cite{Pennington2018} in order to obtain an analytic approximation of the lower bounds on $I(\textbf{X},\textbf{h}^{(\ell)})$ and observe that  $I(\textbf{X},\textbf{h}^{(\ell)})$ is maximised on the EoC, thus suggesting that initialisations which are optimal for geometric training analysis are also preferable from an MI perspective.

\begin{figure}[t!]
\centering
\includegraphics[width=.7\linewidth, trim={0.5cm 0.22cm 1.35cm 1cm},clip]{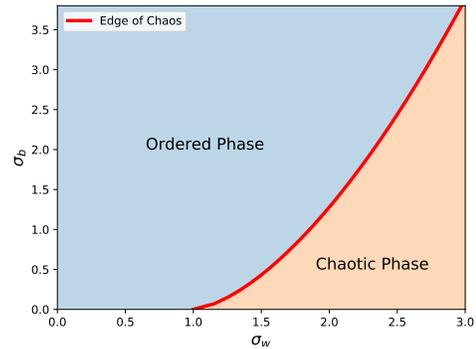}
  \caption{Relation between $\sigma_b$ and $\sigma_w$ at the Edge of Chaos for a feed-forward DNN with $\phi(\cdot) = tanh(\cdot)$. For combinations within the ordered phase, the similar inputs converge to the same output and the gradients vanish with depth, while within the chaotic phase similar inputs diverge and the gradients explode.}
  \label{fig:EOC1}
\end{figure}


In Section \ref{sec:BoundsOnMI}, we prove the existence of a lower bound on the MI problem by introducing the Gaussian model of the DNN. In Section \ref{sec:approximations_lb}, we approximate this lower bound via the mean field approach. Finally in Section \ref{sec:Gabrie}, we compare our lower bound approximation to a state-of-the-art MI approximator and see that the initialisation at edge of chaos similarly increases the MI at deep layers.

 \section{Gaussian Lower Bound on the MI}\label{sec:BoundsOnMI}

A MI analysis of DNNs was previously conducted in the unsupervised setting \cite{Linsker} and furthermore in the analysis of deep learning architectures \cite{Shwartz-Ziv,Goldfeld2018,saxe2019information,gabrie2019entropy,Foggo2020}.  For example, the work in \cite{Shwartz-Ziv} proposed the use of MI to describe the state of the training of DNNs by plotting the MI between an input $\textbf{X}$ and the hidden layer $\textbf{h}^{(\ell)}$, $I(\textbf{X},\textbf{h}^{(\ell)})$ against the MI between the hidden layer $\textbf{h}^{(\ell)}$ and the output $\textbf{Y}$, $I(\textbf{h}^{(\ell)},\textbf{Y})$.  The focus of \cite{Shwartz-Ziv} is on the dynamics of the training process, as opposed to the dependence on DNN parameters $(\sigma_w,\sigma_b,\phi(\cdot))$ at initialisation.  In \cite{Abrol2020}, it was observed experimentally that the MI planes advocated by \cite{Shwartz-Ziv} depend strongly on the parameters $(\sigma_w,\sigma_b,\phi(\cdot))$.   The mathematical analysis here compliments the observations in \cite{Abrol2020}.

We compute the MI $I(\textbf{X},\textbf{h}^{(\ell)})$ between the input $\textbf{X}\sim\mathcal{N}(\textbf{0},\sigma_x^2\textbf{I})$, and the hidden layer $\textbf{h}^{(\ell)}$ by conditioning on a realisation of the random weights ${\mathcal{W}^{:\ell} = \{\textbf{W}^{(i)},\textbf{b}^{(i)}, \textbf{n}^{(i)} \}_{i=1}^\ell}$
\begin{align}\label{eq:expected_MI}
     I(\textbf{X},\textbf{h}^{(\ell)}) &= \mathbb{E}_{\mathcal{W}^{:l}}\left[  I(\textbf{X},\textbf{h}^{(\ell)}| \mathcal{W}^{:\ell})\right]
\end{align}
where $\textbf{n}^{(\ell)}$ is a Gaussian noise variable that we add before the activation function as shown in Figure \ref{fig:noisy_network}. This noise term is necessary because once the the DNN map is sampled, $\textbf{X}\mapsto \textbf{h}^{(\ell)}$ is completely deterministic and consequently, the MI $I(\textbf{X},\textbf{h}^{(\ell)}| \mathcal{W}^{:\ell})$ between the hidden layers and the input is infinite. Initially, this was solved by considering binning the values in the hidden variables \cite{Shwartz-Ziv}, but the results were found to be too dependent on the choices of the bins \cite{saxe2019information}. The research in \cite{Goldfeld2018} and \cite{Foggo2020} showed that it is more appropriate to consider a random noise $\textbf{n}^{(\ell)}\sim \mathcal{N}(0, \sigma_n^2)$ added at each layer.

\begin{figure}[h]
    \centering
    \vspace{0.5cm}
    \includegraphics[width=0.75\linewidth,trim={0.1cm 0.35cm 0.18cm 0.03cm},clip]{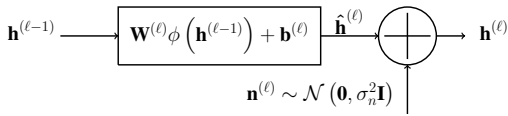}
    \caption{Signal Propagation in the considered perturbed DNN.}
    \label{fig:noisy_network}
\end{figure}

For two random variables with a generic distribution, the MI is not known explicitly and its approximation is a challenging task that has been attempted primarily with non-parametric models \cite{kolchinsky2018caveats,gabrie2019entropy,noshad2019scalable}. However, if two random variables $\textbf{x}$ and $\textbf{y}$ are Gaussian with marginal covariances, $\bm{\Lambda}_x$ and $\bm{\Lambda}_{y}$, and joint covariance, $\bm{\Lambda}_{xy}$, the MI is 
\begin{align}
    GMI(\textbf{X},\textbf{Y}) &= \frac{1}{2}\log \left(\frac{\left| \bm{\Lambda}_x \right| \left| \bm{\Lambda}_{y}\right|}{\left| \bm{\Lambda}_{xy} \right|}  \right) \label{eq:mut_inf_det}
\end{align}
While the distribution of $\textbf{h}^{(\ell)}$ is  known to converge to a Gaussian distribution with large depth $\ell$, in order to lower bound the MI throughout the layers we note that the MI for a general distribution is lower bounded by that of a Gaussian distribution.

\begin{prop}\label{prop:f_to_g}
Let g(x,y) be an n-dimensional Gaussian distribution , $\mathcal{N}(\mu, \Lambda)$, with mean $\mu$ and covariance matrix $\Lambda$. If f(x,y) is an arbitrary distribution with the same mean and covariance matrix, and with f(x) Gaussian, then
\begin{equation}
    I_{f_{(xy)}}[X,Y] \geq I_{g_{(xy)}}[X,Y]
\end{equation}
\end{prop}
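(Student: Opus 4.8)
The plan is to express the difference $I_{f}[X,Y] - I_{g}[X,Y]$ entirely in terms of Kullback–Leibler divergences between $f$ and $g$ at the joint and marginal levels, and then to exploit the two hypotheses (matching mean and covariance, and $f(x)$ Gaussian) together with the monotonicity of relative entropy. First I would record the elementary but decisive fact that the \emph{cross entropy} of any density against a Gaussian depends only on the first two moments. Writing $h(p) = -\int p \log p$ for differential entropy, and noting that $\log g$ is an affine function of the quadratic form $(z-\mu)^{\!\top}\Lambda^{-1}(z-\mu)$, any density $p$ sharing the mean $\mu$ and covariance $\Lambda$ of $g$ satisfies $-\int p \log g = -\int g \log g = h(g)$. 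Consequently $D_{KL}(p\|g) = h(g) - h(p)$ for every such $p$.

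Applying this identity at the joint level and to each marginal (each of which inherits matching moments from $f$ and $g$) yields $D_{KL}(f_{xy}\|g_{xy}) = h_g(X,Y) - h_f(X,Y)$, $D_{KL}(f_x\|g_x) = h_g(X)-h_f(X)$, and $D_{KL}(f_y\|g_y) = h_g(Y)-h_f(Y)$. Next I would substitute the entropy decomposition $I[X,Y] = h(X) + h(Y) - h(X,Y)$ for both $f$ and $g$ and subtract; the three identities above collapse the result into the single clean relation
\begin{equation*}
  I_{f}[X,Y] - I_{g}[X,Y] = D_{KL}(f_{xy}\|g_{xy}) - D_{KL}(f_x\|g_x) - D_{KL}(f_y\|g_y).
\end{equation*}
The hypothesis that $f(x)$ is Gaussian forces $f_x = g_x$, since their moments agree, so $D_{KL}(f_x\|g_x)=0$ and the middle term vanishes, leaving $I_{f}[X,Y] - I_{g}[X,Y] = D_{KL}(f_{xy}\|g_{xy}) - D_{KL}(f_y\|g_y)$.

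Finally I would close the argument with the monotonicity of relative entropy under marginalisation: the projection $(x,y)\mapsto y$ is a deterministic channel, so the data-processing inequality gives $D_{KL}(f_{xy}\|g_{xy}) \geq D_{KL}(f_y\|g_y)$, whence the right-hand side is non-negative and the claim follows. I expect the main conceptual obstacle to be resisting the tempting but inconclusive route of comparing the three differential entropies directly via the maximum-entropy property of the Gaussian: that route leaves the term $h_f(Y)-h_g(Y)\le 0$ pulling against $h_g(X,Y)-h_f(X,Y)\ge 0$ with indeterminate total sign. The decisive realisation is that the right tool is the data-processing inequality for the joint-versus-marginal KL divergences, and that it is precisely the Gaussianity of the $X$-marginal that clears the obstructing term and makes this inequality sufficient.
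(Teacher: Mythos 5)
Your proof is correct and follows essentially the same route as the paper: both use the fact that the cross entropy against a Gaussian depends only on the first two moments to reduce $I_f - I_g$ to $D_{KL}(f_{xy}\|g_{xy}) - D_{KL}(f_y\|g_y)$ after the Gaussianity of $f_x$ cancels the $X$-marginal term. The only cosmetic difference is at the last step, where the paper proves the needed inequality inline via the chain rule, exhibiting the difference as $\int f_Y(\mathbf{y})\,D_{KL}(f_{X|Y=\mathbf{y}}\|g_{X|Y=\mathbf{y}})\,d\mathbf{y}\ge 0$, whereas you invoke the data-processing inequality for marginalisation, which is the same fact.
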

\begin{proof}
\fontsize{10}{10}\selectfont
\begin{align}\nonumber
    &I_{f}[X,Y] - I_{g}[X,Y] =  \int\limits_\mathcal{X,Y} f_{\textbf{X},\textbf{Y}}(\textbf{x},\textbf{y})\log \left( \frac{f_{\textbf{X},\textbf{Y}}(\textbf{x},\textbf{y})}{f_{\textbf{X}}(\textbf{x})f_{\textbf{Y}}(\textbf{y})}\right)d\textbf{x}d\textbf{y} \\
        & \ -  \int_\mathcal{X,Y} g_{\textbf{X},\textbf{Y}}(\textbf{x},\textbf{y})\log \left( \frac{g_{\textbf{X},\textbf{Y}}(\textbf{x},\textbf{y})}{g_{\textbf{X}}(\textbf{x})g_{\textbf{Y}}(\textbf{y})}\right)d\textbf{x}d\textbf{y} \nonumber\\
    &= - \int_\mathcal{X} f_\textbf{X}(\textbf{x})\log \left( f_{\textbf{X}}(\textbf{x})\right)d\textbf{x} -
    \int_\mathcal{Y} f_\textbf{Y}(\textbf{y})\log \left( f_{\textbf{Y}}(\textbf{y})\right)d\textbf{y} \nonumber\\
        & \ + \int\limits_\mathcal{X,Y} f_{\textbf{X}\textbf{Y}}(\textbf{x,y})\log \left( f_{\textbf{X}\textbf{Y}}(\textbf{x,y})\right)d\textbf{x}d\textbf{y} +\int\limits_\mathcal{X} g_\textbf{X}(\textbf{x})\log \left( g_{\textbf{X}}(\textbf{x})\right)d\textbf{x}\nonumber\\
        & \ + \int\limits_\mathcal{Y} g_\textbf{Y}(\textbf{y})\log \left( g_{\textbf{Y}}(\textbf{y})\right)d\textbf{y} - 
        \int\limits_\mathcal{X,Y} g_{\textbf{X}\textbf{Y}}(\textbf{x,y})\log \left( g_{\textbf{X}\textbf{Y}}(\textbf{x,y})\right)d\textbf{x}d\textbf{y}\nonumber\\
    &= - \int\limits_\mathcal{Y} f_\textbf{Y}(\textbf{y})\log \left( f_{\textbf{Y}}(\textbf{y})\right)d\textbf{y} +
    \int\limits_\mathcal{X,Y} f_{\textbf{X}\textbf{Y}}(\textbf{x,y})\log \left( f_{\textbf{X}\textbf{Y}}(\textbf{x,y})\right)d\textbf{x}d\textbf{y}\nonumber\\
        & \  + \int\limits_\mathcal{Y} g_\textbf{Y}(\textbf{y})\log \left( g_{\textbf{Y}}(\textbf{y})\right)d\textbf{y}-
        \int\limits_\mathcal{X,Y} g_{\textbf{X}\textbf{Y}}(\textbf{x,y})\log \left( g_{\textbf{X}\textbf{Y}}(\textbf{x,y})\right)d\textbf{x}d\textbf{y}\nonumber\\
    &= \int_\mathcal{Y} f_\textbf{Y}(\textbf{y})\log \left(\frac{g_{\textbf{Y}}(\textbf{y})}{f_{\textbf{Y}}(\textbf{y})} \right)d\textbf{y}\nonumber\\
    &\quad -  
        \int_\mathcal{X,Y} f_{\textbf{X}\textbf{Y}}(\textbf{x,y})\log \left( \frac{g_{\textbf{X}\textbf{Y}}(\textbf{x,y})}{f_{\textbf{X}\textbf{Y}}(\textbf{x,y})}\right)d\textbf{x}d\textbf{y}\nonumber\\
    &= \int_\mathcal{X,Y} f_{\textbf{X}\textbf{Y}}(\textbf{x},\textbf{y})\log \left(\frac{g_{\textbf{Y}}(\textbf{y})}{f_{\textbf{Y}}(\textbf{y})} \right)d\textbf{x}d\textbf{y} \nonumber\\
        & \quad -\int_\mathcal{X,Y} f_{\textbf{X}\textbf{Y}}(\textbf{x,y})\log \left( \frac{g_{\textbf{X}\textbf{Y}}(\textbf{x,y})}{f_{\textbf{X}\textbf{Y}}(\textbf{x,y})}\right)d\textbf{x}d\textbf{y} \nonumber\\
    & = \int_\mathcal{X,Y} f_{\textbf{X}\textbf{Y}}(\textbf{x},\textbf{y})\log \left(\frac{g_{\textbf{Y}}(\textbf{y})}{f_{\textbf{Y}}(\textbf{y})} \frac{f_{\textbf{X}\textbf{Y}}(\textbf{x,y})}{g_{\textbf{X}\textbf{Y}}(\textbf{x,y})}\right)\ d\textbf{x}d\textbf{y}\nonumber\\
    & = \int_\mathcal{Y} f_\textbf{Y}(\textbf{y}) \int_\mathcal{X} f_{\textbf{X}|\textbf{Y}}(\textbf{x}|\textbf{Y}=\textbf{y})\log \left( \frac{f_{\textbf{X}|\textbf{Y}}(\textbf{x}|\textbf{Y}=\textbf{y})}{g_{\textbf{X}|\textbf{Y}}(\textbf{x}|\textbf{Y}=\textbf{y})}\right)\ d\textbf{x}d\textbf{y} \nonumber\\
    & = \int_\mathcal{Y} f_\textbf{Y}(\textbf{y}) D_{KL}(f_{\textbf{X}|\textbf{Y=y}} || g_{\textbf{X}|\textbf{Y=y}})d\textbf{y}  \geq 0 \nonumber
\end{align}
\textit{where the third equality is due to $f_\textbf{x}$ being Gaussian, hence $f_\textbf{x} = g_\textbf{x}$; the fourth is due to \cite{hot} where it is shown that for the distributions $f$ and $g$ considered here }
\begin{equation}
    \int_\mathcal{X}f(\textbf{x})\log(g(\textbf{x})d\textbf{x} = \int_\mathcal{X}g(\textbf{x})\log(g(\textbf{x}))d\textbf{x} \nonumber
\end{equation}
\end{proof}

The analysis herein relies on the first two moments of $\begin{bmatrix}
    \textbf{X}^\top,\;\;
    \textbf{h}^{(\ell)\top}
    \end{bmatrix}^T$
which for DNNs drawn according to \eqref{eq:random_wb} with $\textbf{X}\sim\mathcal{N}(\textbf{0},\sigma_x^2\textbf{I})$ are:
\begin{align}
    \mathbb{E}_{\textbf{X}|\mathcal{W}^{:\ell}}
    \begin{bmatrix}
    \textbf{X}\\
    \textbf{h}^{(\ell)}
    \end{bmatrix}
    &= 
    \begin{bmatrix}
        \textbf{0}\\
        \bm{\mu}
    \end{bmatrix} \label{eq:norm_mean}
    \\
    \bm{\Lambda}_{xh^{(\ell)}}:=\mathbb{V}ar_{\textbf{X}|\mathcal{W}^{:\ell}}
    \begin{bmatrix}
    \textbf{X}\\
    \textbf{h}^{(\ell)}
    \end{bmatrix}
    &=
    \begin{bmatrix}
    \sigma_x^2\textbf{I} & \bm{\Sigma}_{xh^{(\ell)}}\\
    \bm{\Sigma}_{xh^{(\ell)}}^\top & \bm{\Lambda}_{h^{(\ell)}}
    \end{bmatrix}. \label{eq:normal_distribution}
\end{align}

In order to compute the GMI in \eqref{eq:mut_inf_det}, we reformulate the determinant of the block covariance matrix \eqref{eq:normal_distribution} using the decomposition  \cite{strang1993introduction}:
\begin{align}\label{eq:block_det}
    \begin{vmatrix}
        \bm{\Lambda}_x & \bm{\Lambda}_{xy}\\
        \bm{\Lambda}_{xy}^\top & \bm{\Lambda_y}
    \end{vmatrix}
    =
    \left|\bm{\Lambda}_x\right|
    \left|\bm{\Lambda}_y-\bm{\Lambda}_{xy}^\top\bm{\Lambda}_x^{-1}\bm{\Lambda}_{xy}
    \right|.
\end{align}

Incorporating \eqref{eq:block_det} for 
the determinant of $\bm{\Lambda}_{xh^{(l)}}$ in \eqref{eq:normal_distribution} we define the following lower bound for the MI for a DNN conditional on a set of weights $\mathcal{W}^{:\ell}$ with Gaussian input
\begin{align}
    I(\textbf{X},&\textbf{h}^{(\ell)}| \mathcal{W}^{:\ell}) &\geq \frac{1}{2}\log \left(\frac{ \left| \bm{\Lambda}_{h^{(\ell)}}\right|}{\left| \bm{\Lambda}_{h^{(\ell)}} -\frac{1}{\sigma_x^2} \bm{\Sigma}_{xh^{(\ell)}}^\top\bm{\Sigma}_{xh^{(\ell)}} \right|}  \right).
\end{align}

Thus, from \eqref{eq:expected_MI} the MI of a DNN is bounded as follows
\begin{align}
     I(\textbf{X},\textbf{h}^{(\ell)})
        & \geq \frac{1}{2}\underbrace{\mathbb{E}_{\mathcal{W}^{:\ell}}\left[ \log \left( \left| \bm{\Lambda}_{h^{(\ell)}}\right|  \right)\right]}_{(E_1)} \label{eq:e1} \\
        & \quad \underbrace{- \mathbb{E}_{\mathcal{W}^{:\ell}}\left[ \frac{1}{2}\log \left(\left| \bm{\Lambda}_{h^{(\ell)}} -\frac{1}{\sigma_x^2} \bm{\Sigma}_{xh^{(\ell)}}^\top\bm{\Sigma}_{xh^{(\ell)}} \right|  \right)\right]}_{(E_2)} . \label{eq:e2}
\end{align}

\subsection{Numerical Evaluation of the Gaussian Lower Bound}

The quantities ($E_1$) and ($E_2$) in  \eqref{eq:e1} and \eqref{eq:e2} respectively can be computed by sampling different realisations of the weights $\mathcal{W}^{:\ell}$ and by then averaging over the log-determinant of the matrices $\bm{\Lambda}_{h^{(\ell)}}$ and $\bm{\Lambda}_{h^{(\ell)}} -\frac{1}{\sigma_x^2} \bm{\Sigma}_{xh^{(\ell)}}^\top\bm{\Sigma}_{xh^{(\ell)}}$.  Here we compute these estimates by sampling $10^5$ different inputs $\textbf{X}$ and noises $\textbf{n}$ with $\sigma_x=1$ and $\sigma_n=0.1$. Figure \ref{fig:num_gaussian_lb} shows how the Gaussian lower bounds changes as a function of $\sigma_w$ for DNNs with square weight matrices of size $n\times n$ with $n=90$, with the $tanh$ activation function, and with $\sigma_b$ chosen such that ($\sigma_w$, $\sigma_b$) lies on the EoC.

\begin{figure}
    \centering
    \includegraphics[width=0.8\linewidth,trim={0.3cm 0.1cm 1.4cm 1cm},clip]{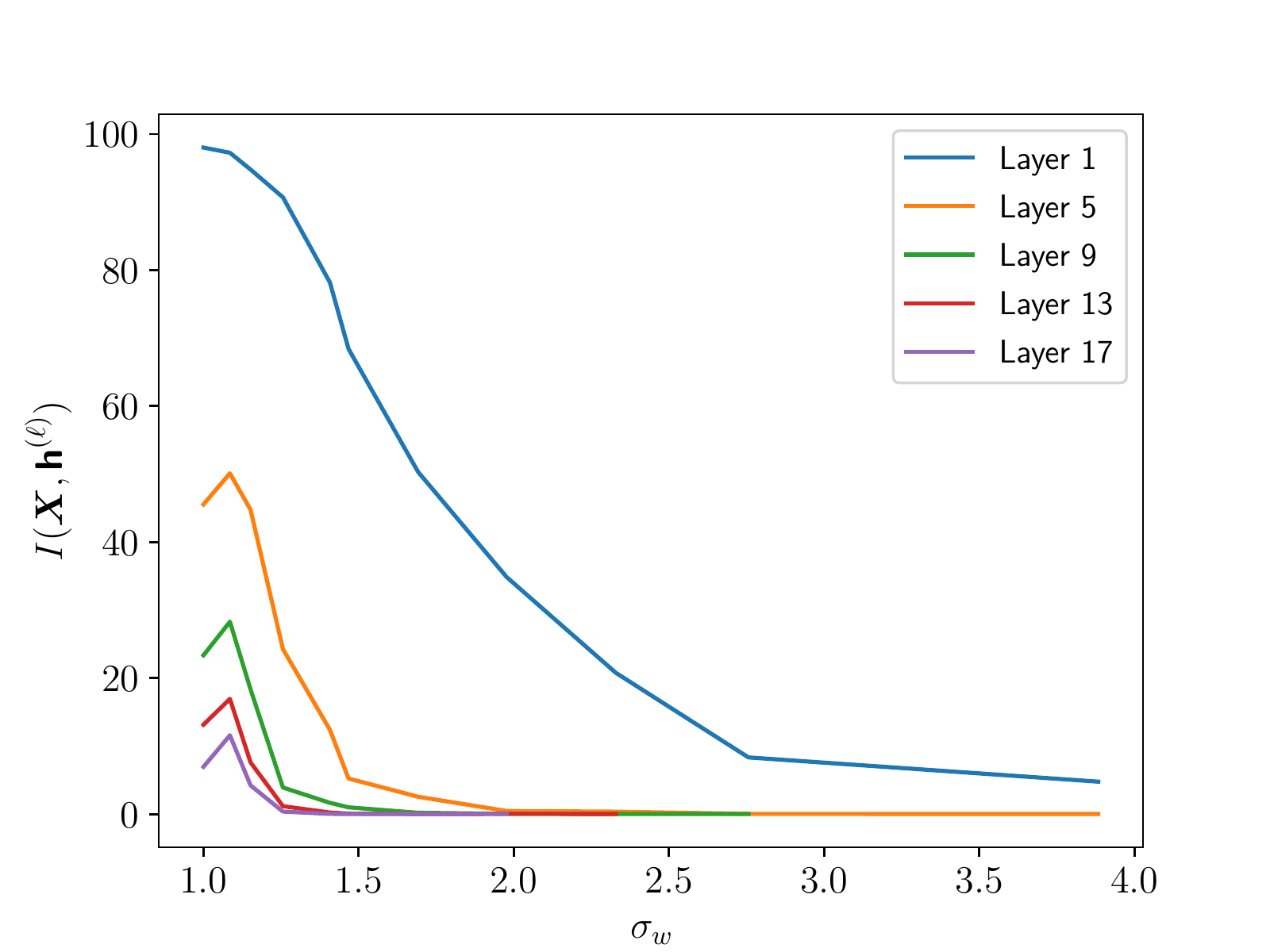}
    \caption{Numerical approximation by sampling of the Gaussian lower bound in \eqref{eq:e1}-\eqref{eq:e2} for the conditional MI of a feed-forward DNN when $\sigma_w$ varies along the EoC with $n=90$ and $\phi(\cdot)=tanh(\cdot)$.}
    \label{fig:num_gaussian_lb}
\end{figure}

These numerical experiments show how the MI between the input and the hidden layers decreases at each layer and is maximised for a value of $(\sigma_w,\sigma_b)$ close to $(1,0)$, as advocated in \cite{Poole2016}.


\section{An Analytic Gaussian Lower Bound}\label{sec:approximations_lb}

The numerical computation of ($E_1$) and ($E_2$) based on sampling the covariance matrix is computationally expensive for large layer width $n$ and depth $L$. Moreover, it less directly shows how this observed MI lower bound compares with the mean field analysis in \cite{Poole2016} and the corresponding EoC analysis.  For easier computation and to better link these analyses, we approximate the  matrices $\bm{\Lambda}_{h^{(l)}}$ and $\bm{\Lambda}_{h^{(l)}} - \frac{1}{\sigma_x^2}\bm{\Sigma}_{xh^{(l)}}^\top\bm{\Sigma}_{xh^{(l)}}$ based on the mean field assumption of \cite{Poole2016}. 

\subsection{Mean Field Approximation}

By the mean field analysis in \cite{Poole2016}, the hidden layers are normally distributed in the large limit, i.e. $n\gg1$, with the mean and variance given in \eqref{eq:norm_mean} and \eqref{eq:normal_distribution} respectively. In order to compute \eqref{eq:e1} and \eqref{eq:e2}, we model the factors $\bm{\Lambda}_{h^{(l)}}$ and $\bm{\Lambda}_{h^{(l)}} - \frac{1}{\sigma_x^2}\bm{\Sigma}_{xh^{(l)}}^\top\bm{\Sigma}_{xh^{(l)}}$ respectively by their expectation over the weights $\mathcal{W}^{:\ell}$.


\subsubsection{Expectation of $\bm{\Lambda}_{h^{(\ell)}}$}

\begin{figure}[t!]
 \centering
    \begin{subfigure}{.45\textwidth}
      \centering
      \includegraphics[width=.99\linewidth]{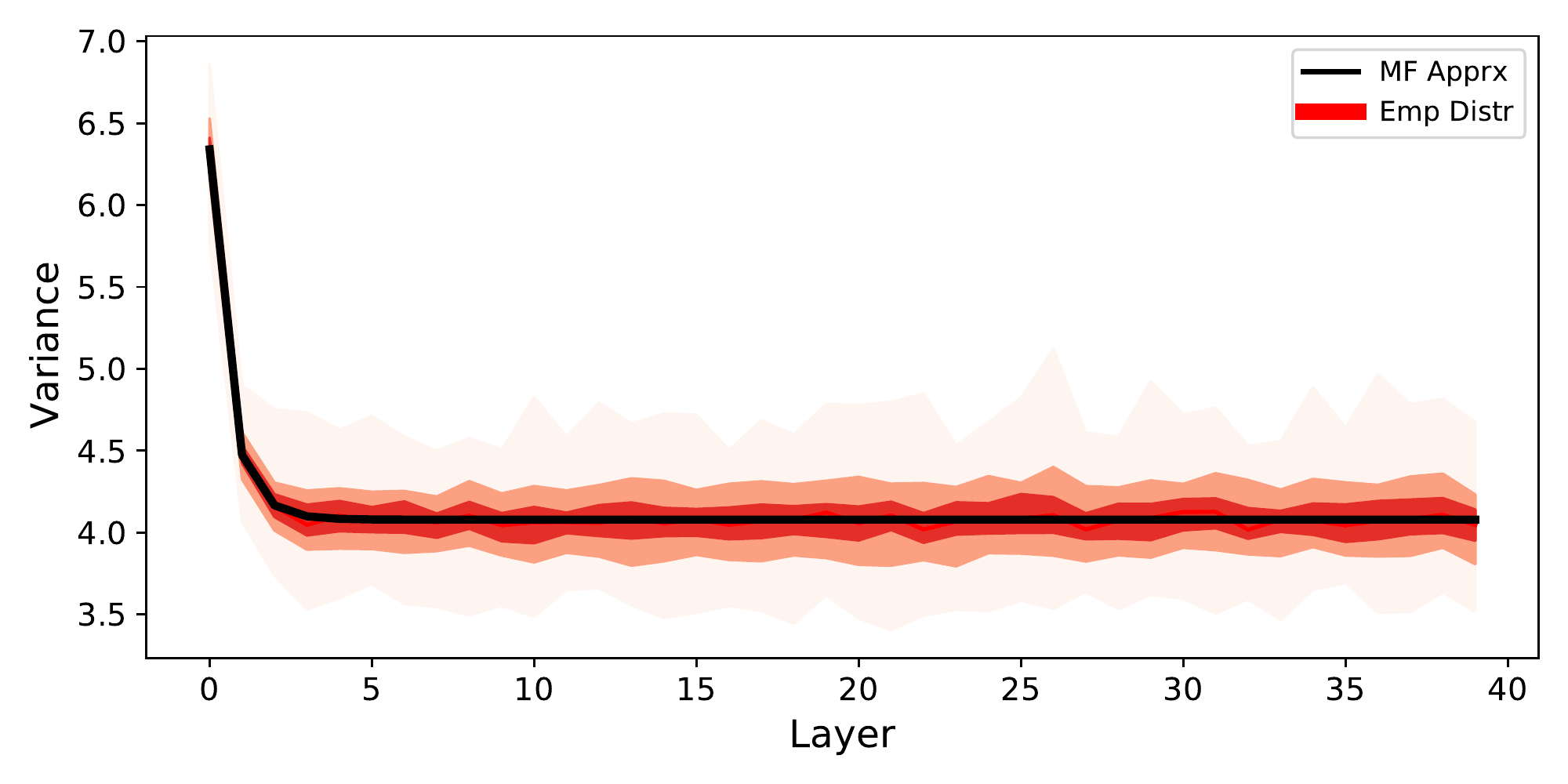}
      \caption{}
      \end{subfigure}%
     \\
    \begin{subfigure}{.45\textwidth}
    \centering
      \includegraphics[width=0.99\linewidth]{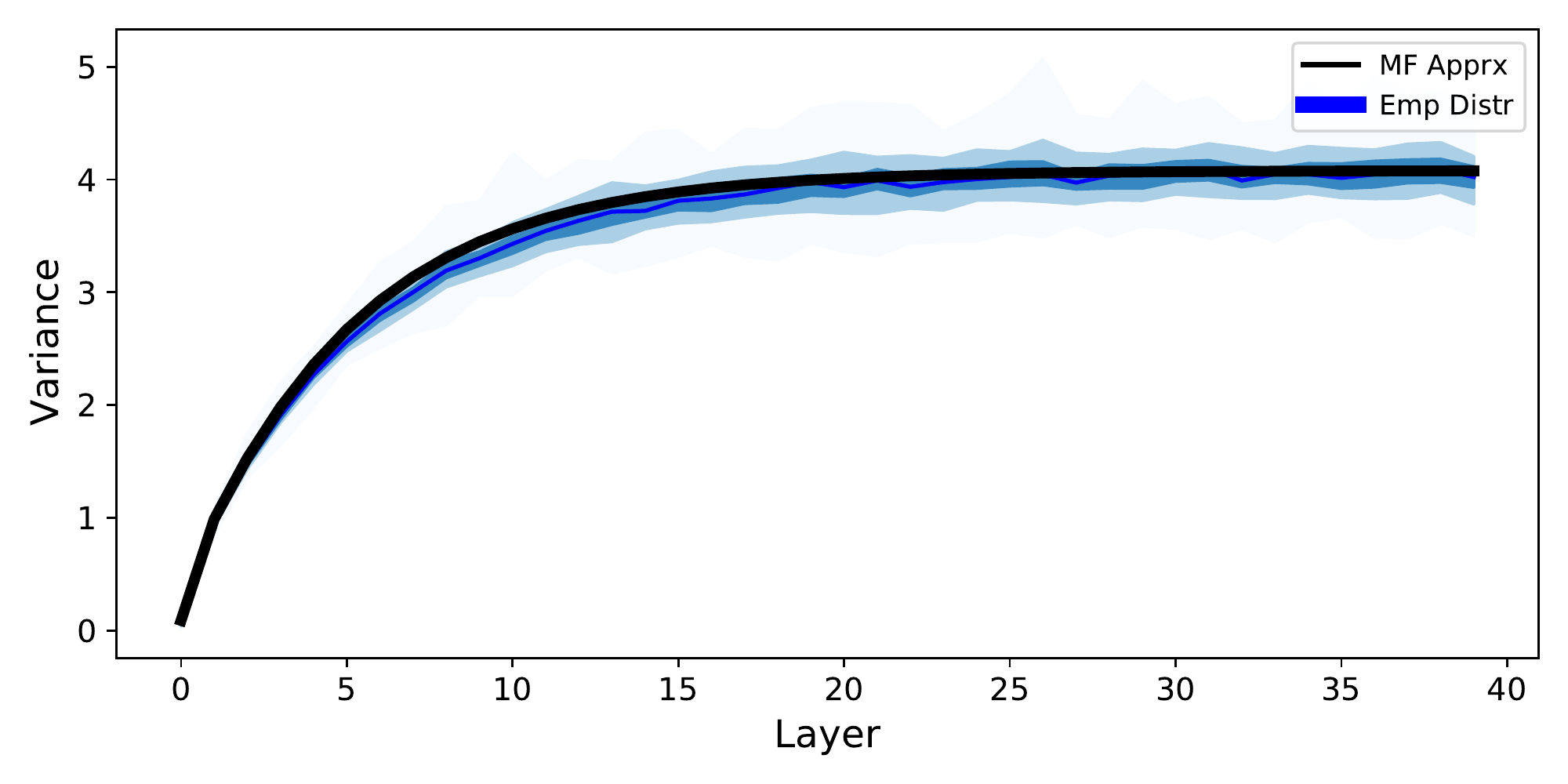}
      \caption{}
    \end{subfigure}
    \centering
    \caption{In (a) and (b) we compare mean field approximation (MF Apprx) of the mean value on the diagonal of the matrices $\bm{\Lambda}_{h^l}$ and ${\bm{\Lambda}_{h^{(\ell)}} - \frac{1}{\sigma_x^2}\bm{\Sigma}_{xh^{(\ell)}}^\top\bm{\Sigma}_{xh^{(\ell)}}}$ respectively with the empirical distribution (Emp. Distr) in function of the layers. The empirical distribution is obtained by considering 100 different set of weights $\mathcal{W}^{:\ell}$ for the DNN when $n=50$ and with $10^5$ different random inputs.}
    \label{fig:covariance_convergence}
\end{figure}

The expectation of  $\bm{\Lambda}_{h^l}$ in \eqref{eq:e1} was studied in \cite{Poole2016, Pennington2018} and was shown that for large DNN width $n\gg 1$ 
\begin{equation}\label{eq:expectation_num}
    \mathbb{E}_{\mathcal{W}^{:\ell}}\left[\bm{\Lambda}_{h^{(\ell)}}\right] = q^{(\ell)} \textbf{I},
\end{equation}
where $q^{(\ell)}$ corresponds to the variance of the hidden layer signal and is defined recursively via
\begin{align}
    q^{(\ell)} &= \sigma_w^2\int \phi(\sqrt{q^{(\ell-1)}}z)\mathcal{D}z + \sigma_b^2 + \sigma_n^2\\
    q^{(1)} &= \sigma_w^2\sigma_x^2 + \sigma_b^2 + \sigma_n^2.
\end{align}
In Figure \ref{fig:covariance_convergence}a we demonstrate the validity of this approximation by plotting $q^{(\ell)}$ as the solid black line along with the empirical distributions obtained by generating 100 realisations of \eqref{eq:FeedForward} for $(\sigma_w,\sigma_b)=(2.5,0.3)$, $\phi(\cdot)=tanh(\cdot)$ and $n=50$ and showing the distribution over $10^4$ randomly drawn inputs $\textbf{X}$.  Note the good agreement of $q^{(l)}$ and the empirical values, as well as the limiting value for large depth, denoted $q^*$ in \cite{schoenholz2016deep}.  The striking agreement is notable given the relatively small DNN width $n=50$ and the mean field approximation following from the $n\rightarrow\infty$ limit.

\subsubsection{Expectation of $\bm{\Lambda}_{h^{(\ell)}} - \frac{1}{\sigma_x^2}\bm{\Sigma}_{xh^{(\ell)}}^\top\bm{\Sigma}_{xh^{(\ell)}}$} To compute the expectation of the argument in the logarithm in ($E_2$) from \eqref{eq:e2}, we can make use of \eqref{eq:expectation_num} and then compute $\mathbb{E}_{\mathcal{W}^{:\ell}}\left[\bm{\Sigma}_{xh^{(\ell)}}^\top\bm{\Sigma}_{xh^{(\ell)}}\right]$. Since
\begin{equation}
    \mathbb{E}_{\mathcal{W}^{:\ell}}\left[\bm{\Sigma}_{xh^{(\ell)}}^\top\bm{\Sigma}_{xh^{(\ell)}}\right]_{ij} = \mathbb{E}_{\mathcal{W}^{:\ell}}\left[\sum_k \mathbb{E}\left[\textbf{x}_k \textbf{h}_i^{(\ell)}\right]\mathbb{E}\left[\textbf{x}_k \textbf{h}_j^{(\ell)}\right]\right] \nonumber
\end{equation}
and as the weights in $\textbf{W}^{(\ell)}$ are independent, the expected covariance matrix is diagonal with values 
\begin{equation}
    \mathbb{E}_{\mathcal{W}^{:\ell}}\left[\bm{\Sigma}_{xh^{(\ell)}}^\top\bm{\Sigma}_{xh^{(\ell)}}\right]_{ii} = \mathbb{E}_{\mathcal{W}^{:\ell}}\left[\sum_k \mathbb{E}\left[\textbf{x}_k \textbf{h}_i^{(\ell)}\right]^2\right] .
\end{equation}
With the mean field analysis, we consider any elements $i$ and $j$ of respectively the input and the hidden layer to be jointly normally distributed according to \begin{equation}
\begin{bmatrix}
\textbf{x}_i\\
\textbf{h}_j^{(\ell)}
\end{bmatrix}
\sim 
\mathcal{N}\left(\begin{bmatrix}
0\\
0
\end{bmatrix},\begin{bmatrix}
\sigma_x^2 & \rho^{(\ell)}\sigma_x \sqrt{q^{(\ell)}}\\
 \rho^{(\ell)}\sigma_x \sqrt{q^{(\ell)}} & q^{(\ell)}
\end{bmatrix}
\right)
\end{equation}
with $\rho^{(\ell)}$ being the correlation coefficient.  The correlation is given at each layer by solving 
\begin{align}
	 \rho^{(\ell)}&\sigma_x \sqrt{q^{(\ell)}} = \mathbb{E}_{\mathcal{W}^{:\ell}}\left[ \textbf{x}_i\textbf{h}_j^{(\ell)}\right] = \int \int   u_1 u_2 \mathcal{D}u_1 \mathcal{D}u_2\\
    &= \underbrace{\int \int \sigma_x z_1 \frac{\sigma_w}{\sqrt{n}}\phi\left( \sqrt{q^{(\ell)}}\left(\rho^{(\ell-1)} z_1 + \sqrt{1-\rho^{(\ell-1)2}} z_2 \right) \right)}_{(E_3)}
\end{align}
The expected squared covariance is then given by 
\begin{align}
	\mathbb{E}_{\mathcal{W}^{:\ell}}\left[\sum_k \mathbb{E}\left[\textbf{x}_k \textbf{h}_i^{(\ell)}\right]^2\right] = n(E_3)^2\textbf{I}
\end{align}
and consequently
\begin{equation}\label{eq:qc}
 \mathbb{E}\left[\bm{\Lambda}_{h^{(\ell)}} - \frac{1}{\sigma_x^2}\bm{\Sigma}_{xh^{(\ell)}}^\top\bm{\Sigma}_{xh^{(\ell)}}\right]  = (q^{(\ell)} - n*(E_3)^2/\sigma_x^2) \textbf{I} = q_c^{(\ell)} \textbf{I}.
\end{equation}
The validity of this approximation is shown in Figure \ref{fig:covariance_convergence}b were as before we note the excellent agreement of the mean field limit \eqref{eq:qc} corresponding to width $n\rightarrow\infty$ and the observed distribution for \eqref{eq:FeedForward} with width $n=50$.

\subsubsection{Lower Bound Approximation}

Since the matrices $\bm{\Lambda}_{h^{(l)}}$ and $\bm{\Lambda}_{h^{(l)}} - \frac{1}{\sigma_x^2}\bm{\Sigma}_{xh^{(l)}}^\top\bm{\Sigma}_{xh^{(l)}}$ are approximated as multiples of the identity matrix, their log-determinant can be easily computed; in particular

\begin{prop}\label{prop:mf_lowerbound}
Under the mean field approximation the MI has the following lower bound 
\begin{align}\label{eq:lower_bound}
    I&(\textbf{X},\textbf{h}^{(\ell)}) \geq \frac{n}{2}\log\left(\frac{q^{(\ell)}}{q_c^{(\ell)}}\right)
\end{align}
\end{prop}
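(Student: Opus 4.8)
The plan is to reduce the statement to a direct substitution into the conditional lower bound \eqref{eq:e1}--\eqref{eq:e2}, using the two mean field identities already established in this section. First I would recall that $I(\textbf{X},\textbf{h}^{(\ell)})$ is bounded below by the sum of $(E_1)$ and $(E_2)$, that is by
\[
\tfrac{1}{2}\mathbb{E}_{\mathcal{W}^{:\ell}}\big[\log|\bm{\Lambda}_{h^{(\ell)}}|\big] - \tfrac{1}{2}\mathbb{E}_{\mathcal{W}^{:\ell}}\big[\log\big|\bm{\Lambda}_{h^{(\ell)}} - \tfrac{1}{\sigma_x^2}\bm{\Sigma}_{xh^{(\ell)}}^\top\bm{\Sigma}_{xh^{(\ell)}}\big|\big].
\]
The mean field step then consists of replacing each random matrix appearing inside a log-determinant by its expectation over the weights $\mathcal{W}^{:\ell}$, which by \eqref{eq:expectation_num} and \eqref{eq:qc} are the scalar matrices $q^{(\ell)}\textbf{I}$ and $q_c^{(\ell)}\textbf{I}$ respectively.

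Next I would evaluate the two log-determinants. Since both arguments are now scalar multiples of the $n\times n$ identity, we have $|q^{(\ell)}\textbf{I}| = (q^{(\ell)})^n$ and $|q_c^{(\ell)}\textbf{I}| = (q_c^{(\ell)})^n$, so the log-determinants are $n\log q^{(\ell)}$ and $n\log q_c^{(\ell)}$. Because these quantities no longer depend on the realisation of $\mathcal{W}^{:\ell}$, the outer expectations are trivial and drop out. Subtracting the two terms and collecting the common factor $n/2$ then gives $\tfrac{n}{2}\big(\log q^{(\ell)} - \log q_c^{(\ell)}\big) = \tfrac{n}{2}\log\big(q^{(\ell)}/q_c^{(\ell)}\big)$, which is exactly the claimed bound \eqref{eq:lower_bound}.

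The one genuinely nontrivial point — and the step I would flag as the main obstacle — is the legitimacy of commuting the expectation with the log-determinant, since in general $\mathbb{E}[\log|M|] \neq \log|\mathbb{E}[M]|$; moreover, Jensen's inequality applied to the concave log-determinant controls the $(E_1)$ and $(E_2)$ terms in opposite directions, so it does not by itself deliver the stated inequality. This substitution is instead the defining content of the mean field approximation: in the large-width regime $n\gg 1$ the off-diagonal entries of $\bm{\Lambda}_{h^{(\ell)}}$ and of the conditional covariance concentrate to zero while the diagonal entries concentrate about $q^{(\ell)}$ and $q_c^{(\ell)}$, so that the random matrices themselves — not merely their means — are well approximated by $q^{(\ell)}\textbf{I}$ and $q_c^{(\ell)}\textbf{I}$. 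Under such concentration $\mathbb{E}[\log|M|]$ and $\log|\mathbb{E}[M]|$ coincide asymptotically, and the empirical agreement at width $n=50$ shown in Figure \ref{fig:covariance_convergence} justifies treating the substitution as exact in this limit. With that interchange granted, the proof reduces, as above, to the elementary evaluation of the determinants of two scalar matrices.
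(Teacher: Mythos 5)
Your proof matches the paper's own (essentially one-line) argument: substitute the mean field approximations $q^{(\ell)}\textbf{I}$ and $q_c^{(\ell)}\textbf{I}$ from \eqref{eq:expectation_num} and \eqref{eq:qc} into the bound \eqref{eq:e1}--\eqref{eq:e2} and evaluate the log-determinants of scalar matrices. Your additional remark that replacing $\mathbb{E}[\log|M|]$ by $\log|\mathbb{E}[M]|$ is the real content of the approximation (and is not justified by Jensen alone) is a correct and worthwhile observation that the paper leaves implicit.
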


\begin{figure}[t!]
    \centering
    \includegraphics[width=0.4\textwidth,trim={0.3cm 0.1cm 1.4cm 1cm},clip]{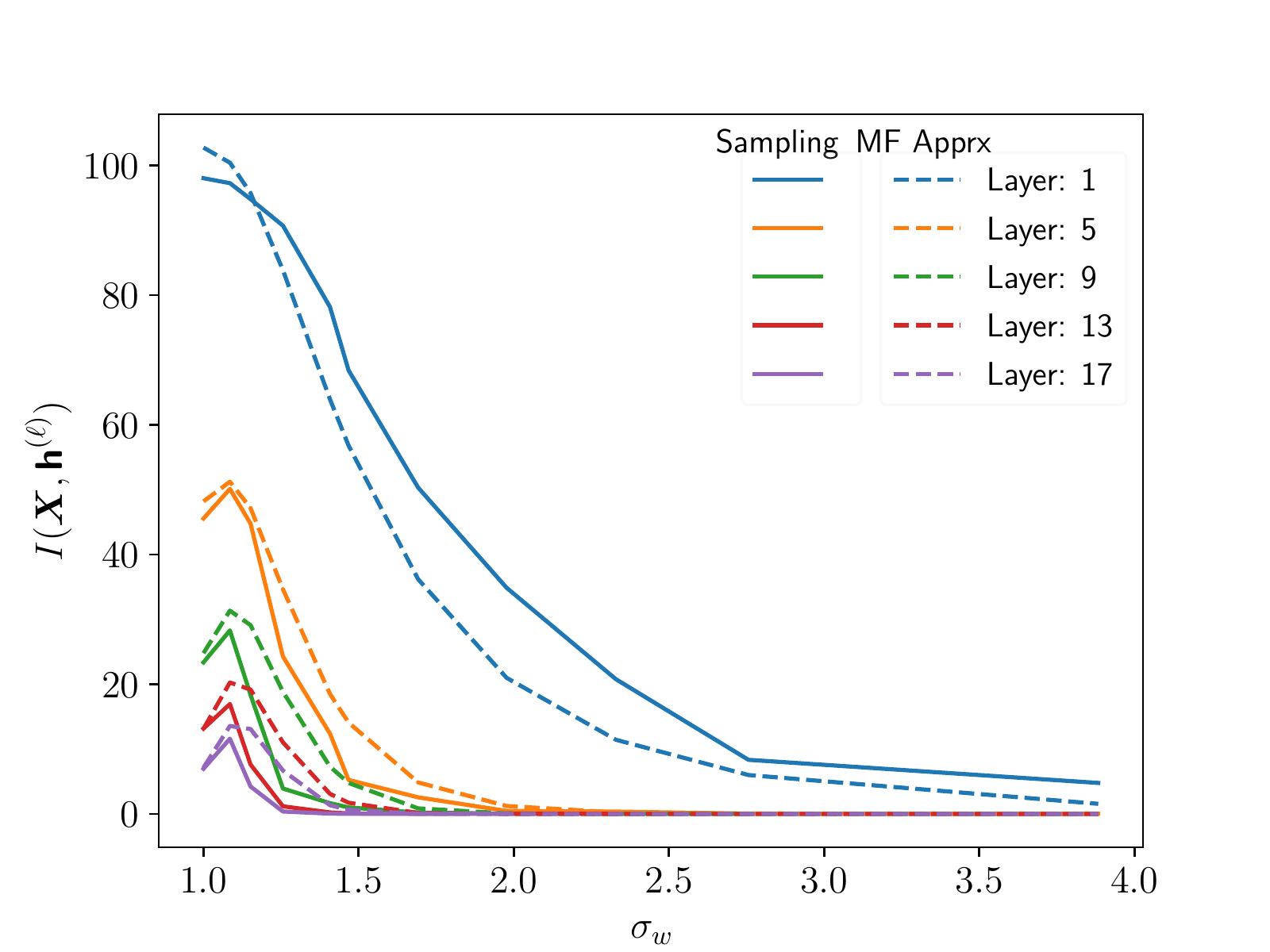}
\caption{Comparison of the analytical lower bound \eqref{eq:lower_bound} from the mean field approximation to the sample MI obtained for a DNN with $n=90$ and $tanh(\cdot)$ activation function when the variance $\sigma_w$ is changed. Here $\sigma_b$ is update so that $(\sigma_w, \sigma_b)$ satisfy the EoC condition.}
\label{fig:approx}
\end{figure}

\begin{figure}[t!]
\centering
  \begin{subfigure}{.4\textwidth}
    \includegraphics[width=\textwidth,trim={0.3cm 0.1cm 1.4cm 1cm},clip]{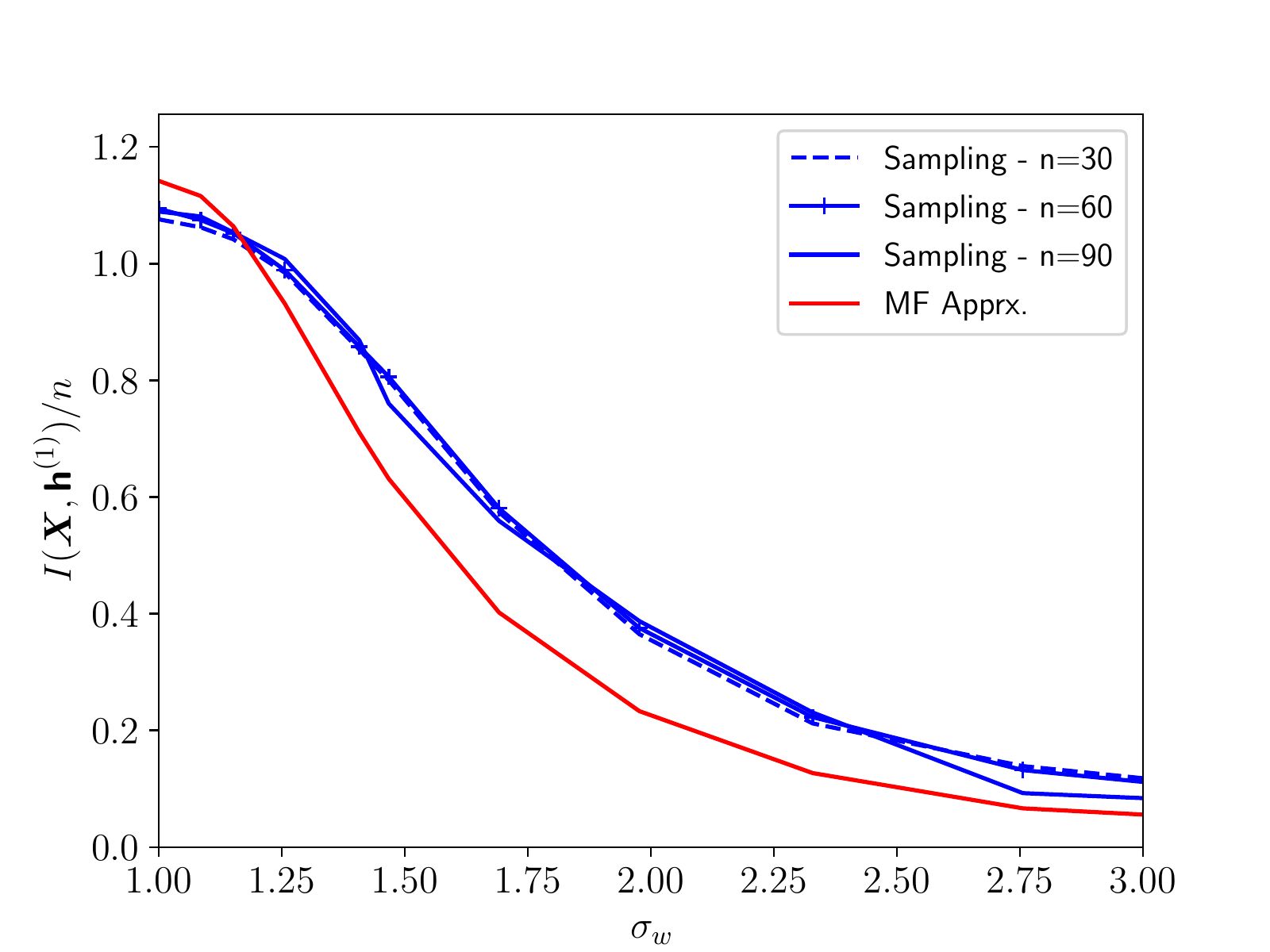}
    \caption{Mutual Information at Layer 1}
    \label{fig:f1_2}
  \end{subfigure}\\
  \centering
  \begin{subfigure}{0.4\textwidth}
    \includegraphics[width=\textwidth,trim={0.22cm 0.1cm 1.4cm 1cm},clip]{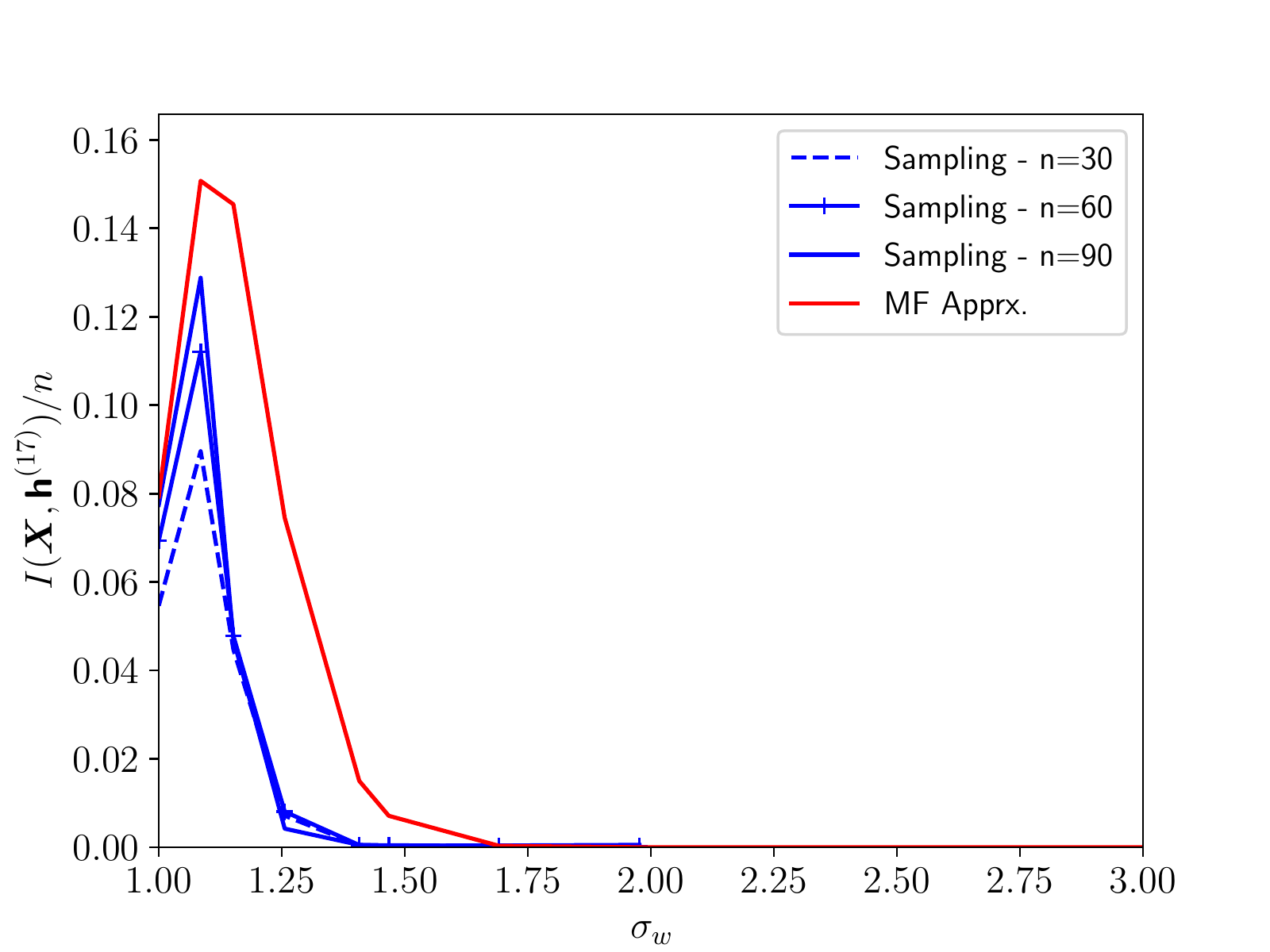}
    \caption{Mutual Information at Layer 17}
    \label{fig:f5_2}
  \end{subfigure}%
  \centering
  \caption{Comparison of the analytic lower bound \eqref{eq:lower_bound} of $I(\textbf{X},\textbf{h}^{(\ell)})/n$ and the sampled Gaussian lower bound in \eqref{eq:e1}-\eqref{eq:e2} for $n=30,60,$ and $90$ for layers $1$ (a) and $17$ (b).}
  \label{fig:increase_of_n}
\end{figure}

Figure \ref{fig:approx} compares the mean field approximation from Prop. \ref{prop:mf_lowerbound} with the direct sampling approach as described in Figure \ref{fig:num_gaussian_lb}.
We observe good general agreement, in particular at the locations where the MI is maximised.  Figure \ref{fig:increase_of_n} illustrates the sampling and mean field calculations for varying widths $n={30,60,90}$ as well as at layers $\ell=1$ and $17$. Improved agreement is observed for later layers and increased DNN width as can be expected since the mean field analysis is increasingly accurate in these limits.

\section{Comparison of MI Approximation} \label{sec:Gabrie}

The work in \cite{Abrol2020} showed that among different MI approximations as \cite{kolchinsky2018caveats} and \cite{noshad2019scalable}, the replica formula in \cite{gabrie2019entropy} is the most consistent measure with the arguments in \cite{saxe2019information}, as it models with the decrease in MI for large enough variance $\sigma_w^2$ the loss in expressivity of the DNN due to the saturation of the $tanh$ activation function. Figure \ref{fig:comp_with_Gabrie} compares the replica formula to the mean field lower bound for DNNs where the bias is null $\sigma_b=0$, the noise has a variance $\sigma_n^2 = 10^{-5}$, and the input has i.i.d. normal elements, since the replica formula is applicable only on these DNNs. 
The results show that the replica formula and the approximated lower-bound are both maximised for a value of the standard deviation $\sigma_w$ thus supporting the saturation argument in \cite{saxe2019information}. However, there is an inconsistency for low standard deviations $\sigma_w$ due to the approximation of both methods.

\begin{figure}[t!]
    \centering
    \includegraphics[width=0.4\textwidth,trim={0.5cm 0.1cm 1.6cm 1cm},clip]{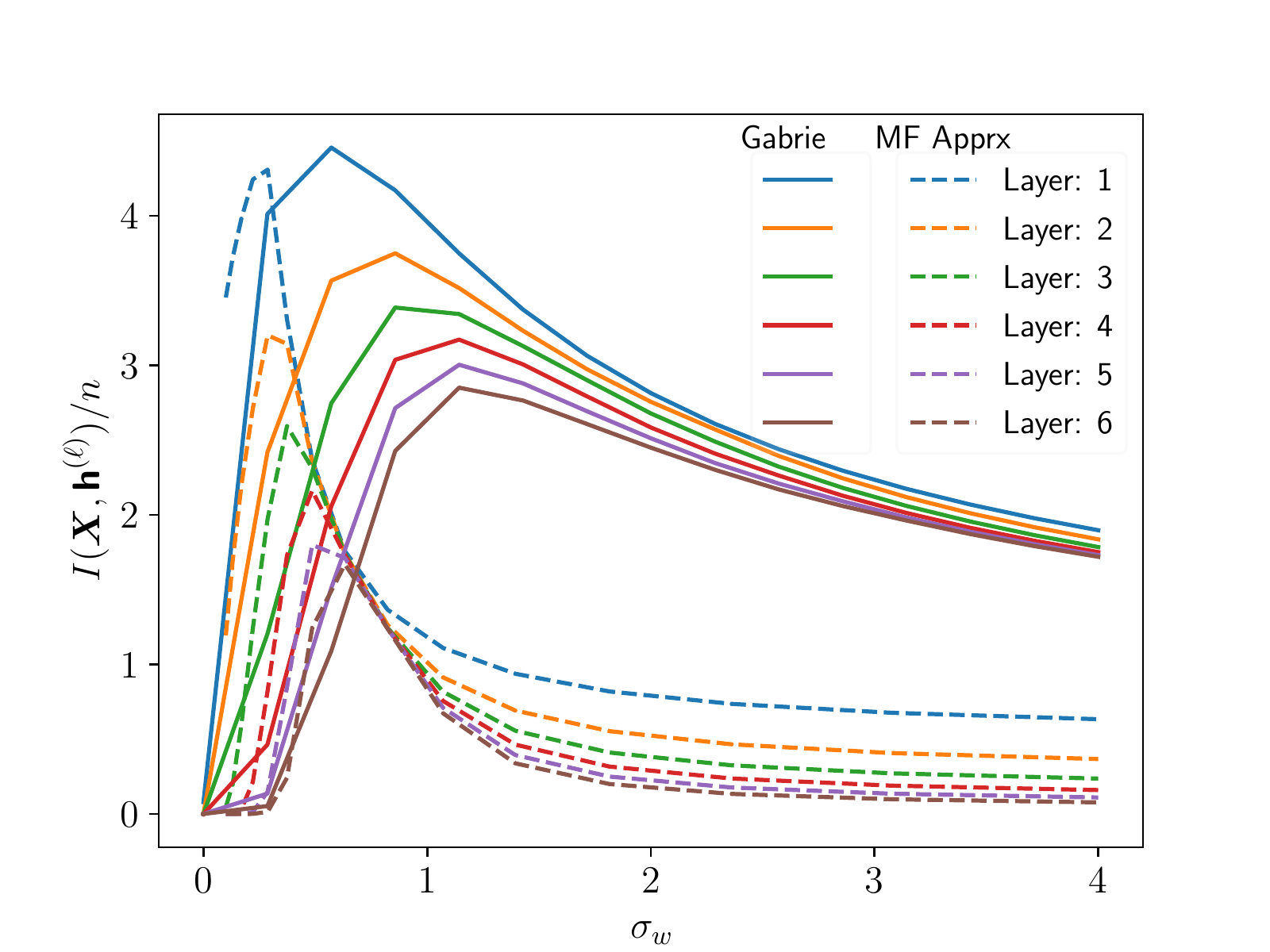}
\caption{Comparison between the \cite{gabrie2019entropy} approximation of $I(\textbf{X},\textbf{h}^{(\ell)})/n$ and the lower bound obtained with mean field approximation for a DNN with $n=1000$ when the $\sigma_w$ is changed on the x axis and $\sigma_b$ is kept fixed.}
\label{fig:comp_with_Gabrie}
\end{figure}

Finally, in \cite{Abrol2020} the analysis with the replica formula suggested that the MI information converges to a non-trivial limit as the depth increases. Figure \ref{fig:conv_mi} shows a consistent behaviour where the mean field lower bound also converges to a maximum for $\sigma_w=1$ as the depth increases, with $\sigma_b=0$ and $\phi(\cdot) = tanh(\cdot)$. Since $(\sigma_w, \sigma_b)=(1,0)$ is on the EoC for the $tanh$ activation function, these calculations suggests that the initialisations on the EoC, which considers a large depth limit, are preferable both for optimisation \cite{Pennington2018} and MI.

\begin{figure}[t!]
    \centering
    \includegraphics[width=0.4\textwidth,trim={0.5cm 0.1cm 1.6cm 1cm},clip]{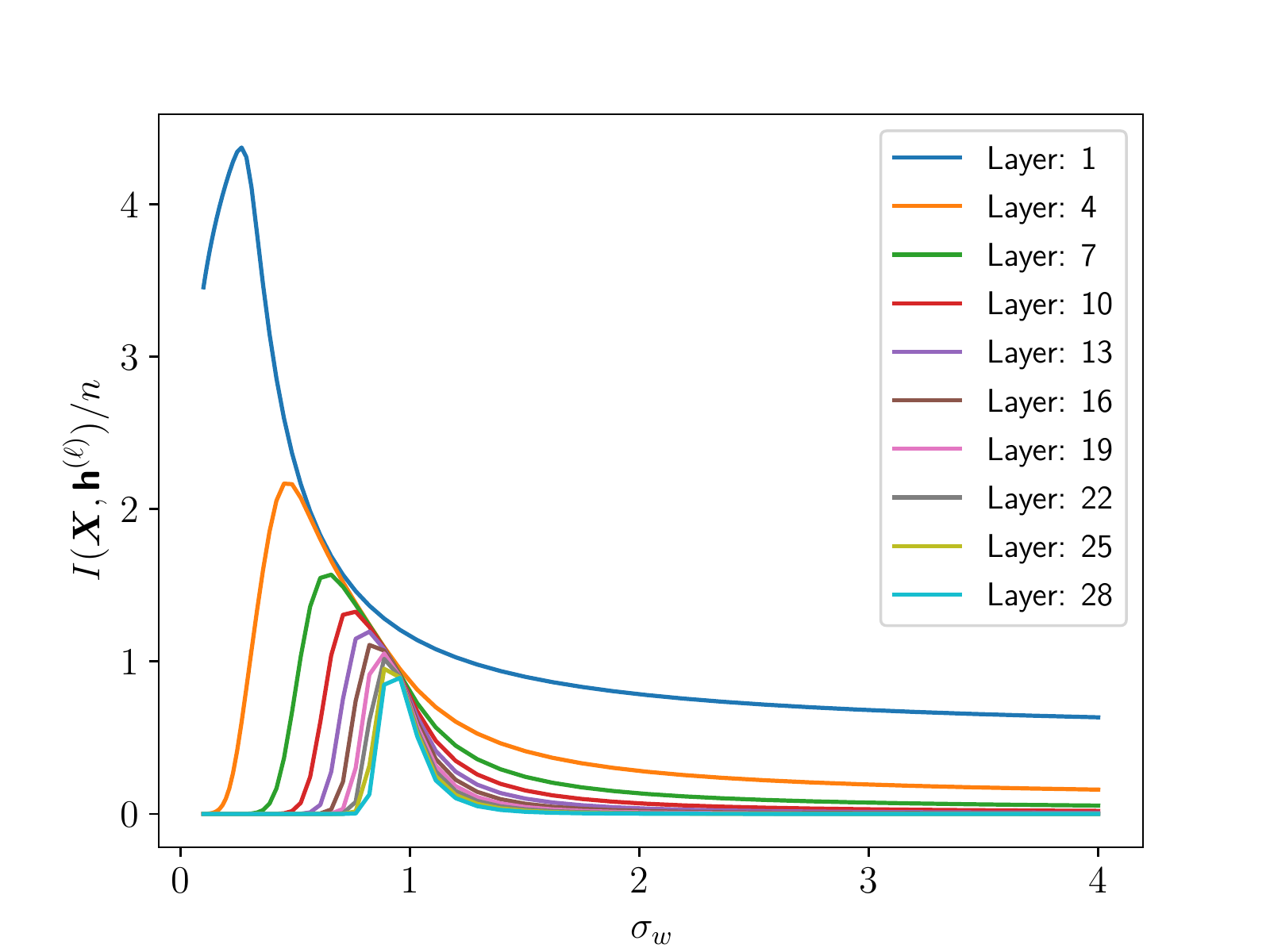}
\caption{Convergence of the lower bound of $I(\textbf{X},\textbf{h}^{(\ell)})/n$ obtained with the mean field approximation when we change the variance $\sigma_w$ and keep $\sigma_b=0$.}
\label{fig:conv_mi}
\end{figure}

\section{Conclusion}

We have presented a lower bound of the MI for feed-forward DNNs and we derived an approximation with the mean field theory which numerical experiments showed to be consistent with the original measure.  The analytic lower bound approximation allows direct investigation of how the MI of DNNs change for different initialisation parameters $(\sigma_w,\sigma_b,\phi(\cdot))$. In particular, we observe that with $\phi(\cdot)=tanh(\cdot)$ activation function, the MI is maximised for $(\sigma_w\sigma_b)$ on the EoC, which suggests the EoC initialisation are similarly optimal from a MI perspective.

\section*{Acknowledgment}

This publication is based on work supported by the EPSRC Centre for Doctoral Training in Industrially Focused Mathematical Modelling (EP/L015803/1) in collaboration with New Rock Capital Management
and by the Alan Turing Institute under the EPSRC grant EP/N510129/1.
\newpage

\enlargethispage{-1.2cm} 
\IEEEtriggeratref{8}

\bibliographystyle{IEEEtran}
\bibliography{bibliography.bib}

\end{document}